\begin{document}
\title{Evader Interdiction and Collateral Damage}

\author{
Matthew~P.~Johnson$^1$\thanks{This work was performed in part while visiting Los Alamos National Laboratory.}~~
 Alexander Gutfraind$^2$\\~\\
$^1$ Pennsylvania State University.\\
$^2$ Theoretical Division, Los Alamos National Laboratory.}
\institute{}
\date{}
\maketitle

\begin{abstract}
In network interdiction problems, evaders (e.g., hostile agents or data packets) may be moving through a network 
towards targets and we wish to choose locations for sensors in order to intercept the evaders before they reach their destinations. 
The evaders might follow deterministic routes or Markov chains, or they may be {\em reactive}, i.e., able to change their routes in order to avoid sensors placed to detect them.
The challenge in such problems is to choose sensor locations economically, balancing security gains with costs, including the inconvenience sensors inflict upon innocent travelers. We study the objectives of 1) maximizing the number of evaders captured when limited by a budget on sensing cost and 2) capturing {all} evaders as cheaply as possible.

We give optimal sensor placement algorithms for several classes of special graphs and hardness and approximation results for general graphs, including for deterministic or Markov chain-based and reactive or oblivious evaders.
In a similar-sounding but fundamentally different problem setting posed by \cite{Rubinstein} where both evaders {\em and} innocent travelers are reactive, we again give optimal algorithms for special cases and hardness and approximation results on general graphs.

\keywords{network interdiction, bridge policy, submodular set cover, Markov chain, minimal cut}
\end{abstract}

\section{Introduction}
In network interdiction problems, one or more {\em evaders} (e.g., smugglers or terrorists, or hostile data packets) travel through a network, beginning at some initial locations and attempting to reach some targets. Our goal is to stop them. We do so by placing {\em sensors} on nodes 
in hopes that most or all the evaders will pass by a sensor and thus be captured (or {\em intercepted}) before reaching their destinations. 
We take as given the evader movement dynamics, which may be either deterministic (each evader specified by a path from source to target) or stochastic, e.g. each evader specified by a Markov chain 
whose states are the nodes of the network.  Evader $e_i$ induces a subgraph $G_i \subseteq G$ in which she roams, according to the probabilities specified by her Markov chain.
An unreactive or {\em oblivious} evader \cite{Gutfraind09unreactive} behaves the same regardless of the choice of sensor locations (or {\em interdiction} sites), and so her set of possible routes can be construed as objects we wish to pierce.

We try to make economical use of the sensors---i.e., to balance the benefits of security (the interdiction of many or all evaders) with the total cost (widely defined) of doing so. The 
cost of placing a sensor at a given node can incorporate the cost of the device itself, the effort or danger involved in performing the placement, and the inconvenience it causes to any innocent travelers subjected to it.
If traffic flow estimates on the graph's edges are known for both evaders and innocent travelers, then it is natural to try to place sensors where they will intercept many evaders but inconvenience few innocents. If a sensor acts as a checkpoint, capturing the evaders but examining and then letting pass the innocents, then the inconvenience cost can be incorporated directly into the node's sensor placement cost since placing two sensors on an innocent's path inconveniences her twice. In this model we study two natural objectives: 1) maximizing the (expected, weighted) number of evaders captured while respecting a budget on sensing cost, and 2) capturing {\em all} evaders (with probability 1) as cheaply as possible. In the latter case evaders may be {\em reactive}, i.e., able to observe the  sensor locations and choose a different path in $G_i$. Regardless, $e_i$ is guaranteed to be captured only if her target node is separated from all her source nodes {\em within subgraph $G_i$}.
We solve these problems optimally in several special graph settings and give hardness and approximation results in general settings.

\begin{wrapfigure}[18]{R}{0.375\columnwidth}
\vskip -.65cm
\centering \includegraphics[width=.25\textwidth]{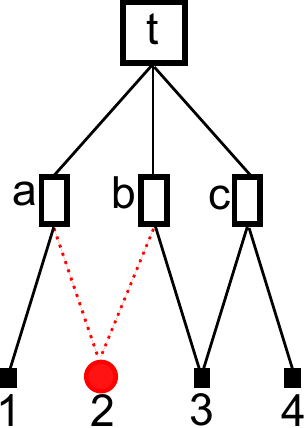} \vskip -.25cm
\caption{A bridges problem instance represented as network interdiction with three intermediate nodes corresponding to bridges. An innocent begins at node 2 and evaders begin at nodes 1,3,4. \label{fig:bridges}}
\end{wrapfigure}
In contrast, allowing the innocents {\em also} to react to sensor locations 
changes the character of the problem significantly. In this setting we study a special case of the problem which was posed by Glazer \& Rubinstein \cite{Rubinstein}, motivated by the following scenario: there are a collection of bridges crossing a river, with each traveler $p$ restricted to using some set $\sigma(p)$ of bridges (because of $p$'s preferences or geography, say), and the task is to decide
which bridges to open and close.
This can be viewed as a special case of our network setting in which every travel path is of length 2 but with the restriction that sensors cannot be placed on a traveler's start node (see Fig.~\ref{fig:bridges}). Note that in this special case, sensors can also be viewed as roadblocks, in the sense that placing a sensor on a node effectively means deleting the node from the network for evader and innocent alike.

This change yields a setting in which the problem instance is specified by a set system with real-valued elements that may be either positive or negative, corresponding to the value or cost (respectively) of capturing evaders or blocking innocents. Several possible objective functions could be considered, such as capturing all evaders while blocking as few innocents as possible or capturing as many evaders as possible given a budget allowing a certain number of blocked innocents.
Unfortunately, the former is precisely the Red-Blue Set Cover problem, which is ``strongly inapproximable'' (hard to approximate with factor $\Omega(2^{\log^{1-\epsilon}} m)$ for any $\epsilon>0$ (where $m$ is the number of sets, or bridges) unless $NP \subseteq DTIME(m^{\text{polylog}(m)})$) \cite{Peleg07}; the latter turns out to be harder still (see Appendix \ref{app:hardness}).
Instead, we next study objectives suggested by \cite{Rubinstein} that combine the two goals into a single score, where captured (i.e. unsuccessful) innocents and uncaptured (i.e. successful) evaders can be construed as false positives ($FP$) and false negatives ($FN$), respectively: maximize the ``net flow'' ($TN-FN$) or minimize the total errors ($FP+FN$).  Although the $TN-FN$ model turns out to be hard to approximate with factor $n^{1-\epsilon}$ (once again, see Appendix \ref{app:hardness}), we obtain nontrivial approximation results for the $FP+FN$ setting.

\noindent \textbf{Contributions.}
With oblivious evaders and innocents, we solve the budgeted problem optimally in path and cycle graphs. With oblivious innocents (and evaders either oblivious or not), we solve the full interdiction problem optimally in paths, cycles, and trees. (In the edge sensor model, full interdiction is 2-approximable, which is the optimal approximation factor assuming UGC.)

In general graphs, we give hardness results including showing that the budgeted problem is NP-hard with even one Markovian evader, strengthening the hardness result of \cite{Gutfraind09umek} which held only for two or more such evaders. In contrast, we show that full interdiction with a single evader is in P. With $m$ possible evader paths, the problem is $H_m$-approximable (where $H_m=\sum_{i=1}^m\frac{1}{i}$), which is essentially optimal, given certain complexity-theoretic assumptions.

When both evaders and innocents are reactive, we optimally solve a special case in which the graph and travelers' sets of paths
can be represented by {\em bridges}  and {\em convex bridge sets}, respectively. We also show that $FP+FN$ is approximable with factor one plus the maximum size of any innocent's bridge set.

\noindent \textbf{Related Work.} 
The problems analyzed here belong to a large class of discrete optimization problems, 
collectively termed Network Interdiction \cite{Mcmasters-1970-optimal,Corley-1982-most,Pan-2003-models,Gutfraind11markovian}. 
They are motivated by applications such as supply chains, electronic sensing, and counter-terrorism  and
relate to classical optimization problems like Set Cover and Max Coverage.
Our setting of budgeted interdiction with deterministic evaders on the path graph can be solved by a complicated algorithm given by \cite{Megiddo83}, but we present a much simpler algorithm. Recent work on Set Cover with submodular costs \cite{KoufogiannakisY09,IwataN09} applies to some of our settings.
Previous work on the Unreactive Markovian Evader (UME) interdiction problem (maximizing the expected number of Markov chain-based evaders captured with $B$ sensors) showed that it is already NP-hard with just two evaders \cite{Gutfraind09umek} and that it is $\frac{e}{e-1}$-approximable by the natural greedy algorithm \cite{Gutfraind09unreactive}, which is the optimal approximation factor (we prove this for completeness in Proposition \ref{prop:budgetedhardness}).

Other evader models have been studied such as the Most Vital Nodes Problem, in which
the task is to delete a set of nodes in order to maximize the weight of the shortest path 
from source to destination \cite{Corley-1982-most,Bar-noy-1995-complexity} or 
to decrease the maximum flow \cite{Corley1974,Ratliff1975}, 
both of which could be construed as frustrating an evader's progress. 
Such evaders are {\em reactive} in the sense that the routes they take 
are modified based on the set of available edges or nodes. 
In \cite{Gutfraind11markovian}, an intermediate model was studied in which the evader follows a parametrized generalization of shortest path and random walk.

The Bridges problem was introduced by Glazer \& Rubinstein \cite{Rubinstein} in an economics context, primarily motivated in terms of strategies for a listener to accept good arguments and reject bad arguments. In this setting, (positive/negative) states correspond to (positive/negative) people and allowing oneself to be persuaded by a statement corresponds to opening a bridge.

\section{Preliminaries}

Given is a graph $G(V,E)$ with $|V|=n$ unless otherwise noted, used by travelers (or {\em people}) of two types: evaders (or {\em bads}) and innocents  (or {\em goods}). (These terms are used interchangeably.)
Each person $p$ can travel within some subgraph $G_p \subseteq G$. Depending on the setting, sensors can be placed on nodes or edges to capture the {\em flow} passing through.
A user $p$'s Markov chain determines the probability weight $f_{p,v}$ of $p$'s traffic through each node $v$. If oblivious, $p$ is unable to shift her flow $f_{p,v}$ from the path going through $v$ to some other path, so placing a sensor at $v$ captures all of $f_{p,v}$ (or at least whatever portion of it was not captured upstream). In some settings we assume all innocents, all evaders, or both are {\em oblivious}, as discussed below. 

We emphasize that {\em reactive} indicates a two-stage setting in which all the sensors are placed and then $p$ can choose an unblocked path in $G_p$ if one exists. Sensors are {\em not} deployed in sequence over time.
We also emphasize that person $p$ is restricted to subgraph $G_p$ regardless of whether $p$ is  oblivious or reactive, an evader or an innocent.

\vskip .1cm
\noindent \textbf{Edge and Node Interdiction.}
In {\em edge interdiction}, sensors are installed on edges and are represented by a matrix of decision variables $\mathbf{r}$: $r_{uv}=1$
if $(u,v)$ has a sensor placed at it (with cost $c_{uv}$) and $0$ otherwise. 
If an evader crosses an edge with a sensor she is detected with probability 1.
In {\em node interdiction}, placing a sensor on node $u$ (with cost $c_u$) means setting
$r_{uv}=1$ for every edge $(u,v)$, that is, interdicting all evaders {\em leaving} $u$). A sensor on a target node does {\em not} protect that node itself but will stop evaders as they pass through it.

The node and edge settings are equivalent in {\em general, directed graphs with location-varying costs}, in the sense that a problem in one setting can be transformed into the other \cite{Gutfraind09unreactive}.
Although the UME model is defined for convenience in terms of edge interdiction, unless otherwise stated we assume node interdiction.
\vskip .1cm
\noindent \textbf{Oblivious Evaders.}
An evader is specified in terms of the probabilities of her taking various routes,
where a route is a walk (possibly containing cycles) ending at a target node. 
A Markovian evader is represented by a Markov chain given by an initial
{\em source} distribution $\mathbf{a}$ over nodes and a transition
probability matrix $\mathbf{M}$. 
The matrix $\mathbf{M}$ has the property that a specified {target} node $t$ is an absorbing state: upon reaching $t$ the evader is removed from the network. 
Under mild restrictions on the Markov chain such as this, the probability of capturing the evader
can be expressed in closed form \cite{Gutfraind09unreactive}:
\begin{equation}
J({\bf \mathbf{a}},\mathbf{M},\mathbf{r})=1-\left({\bf \mathbf{a}}\left[\mathbf{I}-\left(\mathbf{M}-\mathbf{M}\odot{\bf \mathbf{r}}\right)\right]^{-1}\right)_{t}\label{eq:markov-evader-cost}
\end{equation}
where the symbol $\odot$ indicates element-wise (Hadamard) multiplication. 
This formulation generalizes to a setting of multiple simultaneous evaders, each realized with probability $w_e$,
or equivalently having weight $w_i$ representing the importance of capturing her. The probability of capturing  $e_i$ is denoted by $J_i(\mathbf{r})$.
\begin{definition}
An evader $e_i$ is specified by a $(\mathbf{M_i}, \mathbf{a_i})$ pair. Evader $e_i$ is {\em deterministic} if from each of her possible starting nodes, $\mathbf{M_i}$ specifies a single next node with probability $1$, and is {\em nondeterministic} otherwise. In both cases, $\mathbf{a_i}$ may specify multiple starting points with positive probability.
\end{definition}
\noindent \textbf{Budgeted Interdiction (BI).}
The BI objective is to capture as many evaders as possible, 
given a budget on sensors. 
More precisely, suppose we have a bound on the number of nodes we can monitor (or on their total cost, with costs always scaled to be integral). 
Any choice of some subset of nodes to observe determines a probability that a given evader will be captured 
(i.e., that she will pass through at least one observed node) prior to reaching her target $t$. 
The task in Budgeted Interdiction is to maximize the expected (weighted) number of evaders interdicted, subject to a budget $B$ on sensor costs:
$$\text{maximize}~ \sum_i w_i J_i(\mathbf{r}) ~\text{such that} ~\sum_{u} r_{u} c_{u} \le B$$

\noindent \textbf{Full Interdiction (FI).}
This problem seeks 
a minimum-cost set of nodes to observe in order to capture all evaders (denoted by $D$) with probability one. 
$$\text{minimize}~ \sum_{u} r_{u} c_{u}  ~\text{such that} ~ \sum_i J_i(\mathbf{r})= |D| $$
\vskip .1cm
\noindent \textbf{Reactive Evaders and Innocents.}
In this setting both types of travelers are reactive, which means a traveler is captured only if {\em all} her paths {\em within $G_p$} from source nodes to target node have received a sensor placed on some node prior to the target. Let $\hat w_p$ indicate the cost of having person $p$ succeed, which is negative if $p$ is good, and let binary variables $x_p=1$ indicate $p$'s success and $y_s=1$ indicate that bridge $s$ is open. Let $N$ indicate the goods and $D$ the bads. 
Then in the following formulation of the Bridges Problem the constraints code for the requirement that $x_p = \max_{s \in \sigma(p)} y_s$, i.e., traveler $p$ can cross iff at least one of her bridges is open, and the objective is to minimize the sum of errors (failed goods + successful bads).
$$\text{minimize}~ \sum_{p\in D} \hat w_p x_p+\sum_{p\in N} \hat w_p (1-x_p)  ~\text{such that} ~ x_p \le \sum_{s \in \sigma(p)} {y_s} ~\forall p ~\text{and}~ x_p \ge y_s ~\forall p, \forall s\in\sigma(p)$$

Let $TP,FP,TN,FN$ indicate the weighted numbers of true positives (bads failing to cross), false positives (goods failing), true negatives (goods succeeding), and false negatives (bads succeeding), respectively.
We focus on the objective of minimizing $FP+FN$,
in the problem setting in which each traveler's path from source to target passes through exactly one other node, a {\em bridge}. Since the bridges are the decision points, with {\em closed} corresponding to receiving a sensor and {\em open} corresponding to not, in this problem we use $n$ to denote the number of bridges.
The problem setting as defined generalizes to graphs in the roadblock model by replacing each bridge set $\sigma(p)$ with a set of paths to some target node. A traveler $p$ (either bad or good) is then captured if her target is disconnected from all her starting nodes in $G_p$ as for evaders before (where placing a sensor on a node means deleting it), and otherwise inflicts cost $\hat w_p$. 
Since each path is essentially a single edge, sensors can (as mentioned above) be thought of as either checkpoints or roadblocks.

\section{Oblivious innocents and evaders oblivious or not}

\subsection{Paths, trees, and cycles}
In this section we consider the Budgeted (BI) and Full Interdiction (FI) problems where the graph $G$ (on $n$ nodes) is one of several special topologies.

\begin{definition}
In a path graph $P$ with nodes numbered $1$ through $n$, an {\em interval} $[x,y]$ indicates the sequence of nodes numbered $x$ through $y$ (with $x \le y$)  the interval's {\em startpoint} and {\em endpoint}, respectively. Half-open intervals $[x,y) = [x,y-1]$ and $(x,y] = [x+1,y]$ are defined similarly. For nodes $x,y$ we write $x<y$ to indicate that $x$ precedes $y$ in $P$. Similarly, in a tree $T$, an {\em interval} $[x,y]$ is the sequence of nodes lying on the path in $T$ from $x$ to $y$. A node $v$ {\em pierces} interval $[x,y]$ if $v \in [x,y]$.
An {\em interval sequence} is a set of intervals that can be ordered so that each interval is strictly contained by the previous one. 
All the intervals in a {\em suffix sequence} share the same endpoint; all the intervals in a {\em prefix sequence} share the same start point.
\end{definition}

The budgeted problem is solvable in polynomial time as we show below. It will follow that FI is also solvable 
by searching for the smallest budget $B$ that permits full interdiction. The problem can be solved more efficiently, however.

\begin{theorem}\label{thm:optonpaths}
When $c_v=1 ~\forall v$, Full Interdiction is optimally solvable in $O(n \log n)$ time on paths.
\end{theorem}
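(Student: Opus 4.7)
The plan is to reduce FI on paths to minimum interval stabbing---choose a fewest-point set that hits every interval in a given family---and then apply the classical greedy sweep. First I would establish that each evader $e_i$, with roaming subpath $G_i = [\ell_i, r_i]$ and target $t_i$, contributes at most two ``binding'' intervals. Let $s_i^L$ denote the largest source of $e_i$ strictly less than $t_i$, and $s_i^R$ the smallest source strictly greater (either may be absent). Since sensors act by node deletion and $G_i$ is a subpath, $e_i$ is captured with probability one iff after the sensors' deletion $t_i$'s component in $G_i$ contains no source; equivalently, iff some sensor lies in $L_i = [s_i^L, t_i-1]$ whenever $s_i^L$ exists, and some sensor lies in $R_i = [t_i+1, s_i^R]$ whenever $s_i^R$ exists. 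Only these two intervals matter: over all left sources $s_j$ of $e_i$ the constraint intervals $[s_j, t_i-1]$ are nested with $L_i$ the smallest, so hitting $L_i$ hits them all, while hitting $L_i$ is necessary because $s_i^L$'s own constraint is $L_i$ itself. The right side is symmetric.

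Next I would consolidate. Group the $L_i$'s by right endpoint $t_i-1$; among those sharing a right endpoint, keep only the smallest, since hitting it hits all larger supersets. This leaves at most $n$ left intervals, and the same bookkeeping on the right gives at most $2n$ intervals in total. Now apply the standard greedy: sort by right endpoint and sweep, placing a sensor at the right endpoint of each interval not yet stabbed. Optimality follows from the textbook exchange argument applied to the first uncovered interval; correctness of the full reduction comes from step one.

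The running time is dominated by sorting at most $2n$ intervals, giving $O(n \log n)$; the reduction and consolidation themselves take $O(n + |D|)$ with standard hashing by endpoint. The main obstacle I anticipate is step one---specifically, verifying that for a nondeterministic evader whose Markov-chain walk may cycle and revisit nodes, capture with probability one is genuinely characterized by the two-interval stabbing condition, and that sources on the same side of $t_i$ beyond $s_i^L$ or $s_i^R$ impose no additional constraint beyond what $L_i$ and $R_i$ already encode.
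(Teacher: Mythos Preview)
Your proposal is correct and follows essentially the same approach as the paper: reduce each evader to at most two ``smallest'' intervals (one per side of the target), then solve minimum interval piercing. The paper phrases the final step as Minimum Clique Cover on the interval graph while you invoke the greedy right-endpoint sweep directly; these are the same algorithm. Your explicit consolidation by shared endpoint to cap the interval count at $2n$ is a mild sharpening that the paper leaves implicit.
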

\begin{proof}
Consider an evader $e_i$ with start nodes $S_i$ and a target node $t_i$. 
We must capture evader $e_i$ in the case of each starting point $s \in S_i$ before she reaches node $t_i$. 
Node $s$ lies either to the left or right of $t_i$, assume to the left, i.e., $s < t_i$ (e.g., node 3 in Fig. \ref{fig:stabbing}). 
Evader $e_i$ may (probabilistically) move to the left before returning right, 
and so a sensor placed to the left of $s$ may capture the evader with positive probability. 
For capturing $e_i$ with probability 1, however, it is necessary and sufficient to place a sensor somewhere in the interval $[s,t_i)$ ($[3,6)$ for the first evader $e_1$ in Fig. \ref{fig:stabbing}).

Each starting point $s$ of evader $e_i$ will correspond to an interval $[s,t_i)$ or interval $(t_i,s]$, 
depending on the relative location of $s$ to $t_i$. 
Each such interval must be pierced.
Intervals of the former kind (with the evader approaching the target from the left) will form a sequences of suffix intervals;
intervals of the latter kind (with the evader is approaching from the right) will form a sequence of prefix intervals. 
In the worst case, it could be necessary to consider $m=O(n^2)$ intervals
because each interval may be traversed with positive probability by some evader.
It suffices to consider each evader's smallest left interval and smallest right interval ($[3,6)$ and $(6,8]$ for $e_1$ in Fig. \ref{fig:stabbing})), since each such interval is contained within all others in the sequence.
We build an {\em interval graph} $H$ by associating a node with each smallest interval (each of which can be found in time $O(\log n)$ by binary search)
and placing an undirected edge for any two smallest intervals that intersect.  
Because the cost of piercing any interval is $c_v=1$, and because each intersection of intervals corresponds to a clique of $H$,
Full Interdiction is equivalent to Minimum Clique Cover on $H$.
The latter is solvable in linear time on the interval graph (plus time for sorting) \cite{EvenLRSSS08}. 
\qed
\end{proof}

\begin{figure}[t!]
\centering \includegraphics[width=.55\textwidth]{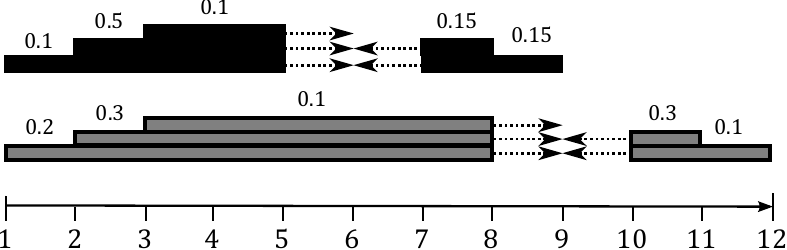} 
\caption{An instance of Network Interdiction with two stochastic evaders on the graph $P_{12}$. One evader is traveling from nodes 3 and 8 to 6, and one is traveling from nodes 3 and 11 to 9.
Because of their stochastic motion the evaders could be partially interdicted at nodes such as $1$ or $12$ that do not lie on the shortest paths to their targets.}\label{fig:stabbing}
\end{figure}

\noindent A generalization is also possible, as follows.

\begin{theorem}
When $c_v=1 ~\forall v$, Full Interdiction is optimally solvable in $O(n^3)$ time on trees.
\end{theorem}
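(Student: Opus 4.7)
The plan is to generalize the reduction used in Theorem~\ref{thm:optonpaths}: for each evader $e_i$ with starting point $s \in S_i$ and target $t_i$, let $P_{i,s}$ denote the unique path in $T$ from $s$ to $t_i$, excluding the endpoint $t_i$. Since removing any interior vertex of $P_{i,s}$ disconnects $s$ from $t_i$ in the tree $T$, every realization of $e_i$'s random walk within $G_i \subseteq T$ that starts at $s$ and reaches $t_i$ must visit every vertex of $P_{i,s}$, while conversely a sensor on any single vertex of $P_{i,s}$ captures $e_i$ with probability $1$ whenever she starts at $s$. Thus FI is equivalent to a minimum hitting-set problem on the collection $\{P_{i,s}\}_{i,s}$, which contains at most $m = O(n^2)$ distinct tree paths (bounded by the number of vertex pairs).

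I would solve this hitting-set problem with the following greedy algorithm. Root $T$ at an arbitrary vertex $r$. For each path $P$ in the collection, let $\mathrm{top}(P)$ be the vertex of $P$ closest to $r$ (equivalently, the LCA of $P$'s endpoints in $T$). Sort the paths in non-increasing order of the depth of $\mathrm{top}(P)$ and scan through them, maintaining an initially empty set $S$: whenever the current path $P$ is disjoint from $S$, add $\mathrm{top}(P)$ to $S$. Report $S$ as the hitting set.

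The key correctness step — and what I expect to be the main obstacle — is the claim that if the greedy outputs $k$ vertices $v_1, \ldots, v_k$ in response to ``triggering'' paths $P_1, \ldots, P_k$, then $P_1, \ldots, P_k$ are pairwise vertex-disjoint. This immediately yields a matching lower bound of $k$ on every hitting set and hence the optimality of the greedy. To argue disjointness, suppose for contradiction that $P_i$ and $P_j$ with $i < j$ share some vertex $w$. Both $v_i = \mathrm{top}(P_i)$ and $v_j = \mathrm{top}(P_j)$ are ancestors of $w$ in $T$, hence comparable in the ancestor order; the greedy's depth ordering $\mathrm{depth}(v_i) \ge \mathrm{depth}(v_j)$ forces $v_j$ to be a strict ancestor of $v_i$ (equality is ruled out because $v_i \in S$ already when $P_j$ is scanned). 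But then $v_i$ lies on the subpath of $P_j$ from $v_j$ down to $w$, and so $v_i \in P_j$; this means $P_j$ was already hit when $v_i$ was added at step $i$, contradicting the fact that the greedy added $v_j$ at step $j$. The subtlety here is keeping the ancestor relations straight and ruling out the various degenerate cases.

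For the running time, sorting and one-time LCA preprocessing cost $O(n^2 \log n)$ and $O(n)$ respectively, and the bottleneck is the hit test ``is $P \cap S = \emptyset$?'' which takes $O(n)$ per path naively, for $O(m \cdot n) = O(n^3)$ total.
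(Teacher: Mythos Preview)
Your proposal is correct, and the greedy argument is clean: the key disjointness claim goes through exactly as you outline, since two ancestors of a common vertex $w$ are comparable, and the depth ordering then forces $v_i$ to lie on the $w$--$v_j$ subpath of $P_j$. One small wording point: the ``endpoints'' of $P_{i,s}$ are $s$ and the neighbour of $t_i$ on the $s$--$t_i$ path (not $t_i$ itself), so be careful when you write ``LCA of $P$'s endpoints''; the definition ``vertex of $P$ closest to $r$'' is the one you actually use, and it is unambiguous.

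The paper takes a different and much terser route. It observes that the family of $O(n^2)$ tree paths has an intersection graph that is chordal (since paths are subtrees, and intersection graphs of subtrees of a tree are exactly the chordal graphs), and that by the Helly property for subtrees, piercing all paths is equivalent to Minimum Clique Cover on this chordal graph, which is solvable in linear time by Gavril's algorithm once the $O(n^3)$-size intersection graph is built. So the paper offloads both the algorithm and its optimality proof to known structural results. Your approach is more elementary and self-contained: it never names chordality or Helly, and the pairwise-disjoint triggering paths give the LP/clique-cover lower bound directly. The trade-off is that the paper's two-line argument generalizes immediately to any family with a chordal intersection graph satisfying Helly, whereas your greedy is specific to tree paths but would be easier to implement and to explain without external references.
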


\begin{proof}
The $O(n^2)$ intervals are now paths in the tree, whose intersection graph (constructable in $O(n^3)$) is a {\em chordal graph}, on which Minimum Clique Cover can also be solved in linear time \cite{Gavril72}.  \qed
\end{proof}

For the setting of edge sensors, Full Interdiction is closely related to the Minimum Directed Multicut (MDM) problem,
in which the task is to find a minimum cut that separates each of $k$ source-sink pairs $(s_i,t_i)$.
\begin{proposition}
In the edge interdiction setting, Full Interdiction is 2-approximable on trees, which is the optimal factor (assuming the Unique Games Conjecture \cite{KhotR08}).
\end{proposition}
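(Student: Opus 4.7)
The plan is to establish both directions by relating edge-interdiction Full Interdiction on a tree $T$ to the Multicut on Trees problem, for which a $2$-approximation and a matching UGC-based hardness are already known. Upstairs I would invoke the LP-rounding algorithm of Garg-Vazirani-Yannakakis; downstairs I would reduce from Vertex Cover, which Khot-Regev showed is hard to approximate within $2-\epsilon$ under UGC.

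For the upper bound, I would first argue that each evader $e_i$ with source set $S_i$ and target $t_i$ decomposes into $|S_i|$ source-target pairs $(s,t_i)$ for $s \in S_i$. Because $G_i \subseteq T$ is itself a subtree, there is a unique $s$-to-$t_i$ path in $G_i$, and (as in the argument behind Theorem \ref{thm:optonpaths}) capturing $e_i$ from source $s$ with probability $1$ is equivalent to placing an edge sensor on any edge of this path. Collecting these pairs across all evaders yields a Multicut on Trees instance with the same optimum and the same cost structure; the GVY $2$-approximation then delivers the claim.

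For the lower bound, I would reduce Vertex Cover to edge-FI on a star. Given $(V,E)$, build the star $T$ with center $c$ and a leaf $\ell_v$ for each $v \in V$, all edges of unit cost. For each $(u,v) \in E$ introduce a deterministic evader with source $\ell_u$ and target $\ell_v$, with $G_i$ the two-edge path $\ell_u\!-\!c\!-\!\ell_v$; separating the pair requires cutting $(\ell_u,c)$ or $(c,\ell_v)$, which corresponds to putting $u$ or $v$ into the cover. This is an approximation-preserving reduction, so a $(2-\epsilon)$-approximation for edge-FI on trees would violate UGC. The main obstacle is the upper-bound reduction: verifying that it is lossless, i.e., that the Markovian/stochastic behavior of $e_i$ adds nothing under \emph{edge} interdiction because a single cut on the tree-unique $s$-$t_i$ path in $G_i$ captures $e_i$ with probability $1$ from source $s$, regardless of how she wanders inside $G_i$ beforehand. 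Once that equivalence is nailed down, both halves of the proposition fall out of the cited off-the-shelf results.
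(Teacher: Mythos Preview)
Your proposal is correct and follows essentially the same route as the paper: both identify edge-sensor Full Interdiction on a tree with (Directed) Multicut on trees, then invoke the Garg--Vazirani--Yannakakis $2$-approximation for the upper bound and the Khot--Regev UGC hardness of Vertex Cover (via the standard star reduction) for the lower bound. The paper's own proof is a one-line assertion of this equivalence, whereas you spell out the decomposition into source--target pairs and the explicit star construction; the added detail is sound, including your observation that in a tree any walk from $s$ to $t_i$ inside $G_i$ must traverse every edge of the unique $s$--$t_i$ path, so cutting one such edge suffices regardless of the evader's stochastic wandering.
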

\begin{proof}
When restricted to an underlying tree graph, the Full Interdiction problem is identical to Directed Multicut. \qed
\end{proof}

\begin{definition}
For a possible route $r$ traveled by some evader, 
let $V_r$ indicate the nodes visited along route $r$ {\em before} reaching its target, or the {\em route set} of $r$. Let $m$ be the number of distinct route sets among all evaders.
\end{definition}

Note that multiple distinct routes can give rise to the same route set, and that a route set in a path graph is always an interval with an end point at the target node. We now turn to Budgeted Interdiction. 
\begin{theorem}\label{thm:detdp}
Let $m$ be the total number of different evader route sets.
Budgeted Interdiction with deterministic evaders and (integer) budget $B$ is optimally solvable on the path graph in time $O(B n m) = O(B n^3)$.
\end{theorem}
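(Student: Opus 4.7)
The plan is to reduce budgeted interdiction with deterministic evaders on a path to a weighted interval-piercing problem and then to solve that by a dynamic program over sensor positions. On a path graph, a deterministic evader $e_i$ starting at node $s$ with probability $a_{i,s}$ traverses a fixed walk ending at target $t_i$, whose set of visited nodes (the route set) is a contiguous interval $I_{i,s}$ of the path, and this start $s$ contributes weight $w_i a_{i,s}$ to the objective exactly when some sensor lies in $I_{i,s}$. Aggregating the $(I_{i,s}, w_i a_{i,s})$ pairs by common route set yields $m$ weighted intervals $\{(I_j, w'_j)\}$, so $\sum_i w_i J_i(\mathbf{r})$ becomes precisely the total weight of pierced intervals, and the task is to pierce $m$ weighted intervals on a path of $n$ nodes using sensors of total cost at most $B$.

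Second, I would define $f(v,b)$ as the maximum pierced weight achievable by a sensor placement of combined cost exactly $b$ whose rightmost sensor sits at position $v$, with sentinel $v=0$ denoting ``no sensor placed yet.'' The recurrence is
\[
f(v,b) \;=\; \max_{0 \le u < v}\bigl\{\, f(u, b - c_v) + \mathrm{gain}(u,v)\,\bigr\}, \qquad \mathrm{gain}(u,v) \;:=\; \sum\{\,w'_j : u < l_j \le v \le r_j\,\},
\]
with final answer $\max_{b \le B,\; v \le n} f(v,b)$. Correctness follows because, for any placement $v_1 < \cdots < v_k$ (with $v_0=0$), every pierced interval is uniquely credited to its leftmost piercing sensor, whose share equals $\mathrm{gain}(v_{i-1},v_i)$; unpierced intervals contribute zero.

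For the time bound, each of the $O(nB)$ cells can be filled in $O(n+m)$ amortized time: scan $u$ downward from $v-1$ to $0$ while maintaining $\mathrm{gain}(u,v)$ via the identity $\mathrm{gain}(u-1,v) - \mathrm{gain}(u,v) = \sum\{w'_j : l_j = u,\, r_j \ge v\}$. Pre-bucketing intervals by left endpoint ensures each interval is inspected at most once over the whole $u$-scan for a fixed $v$, giving $O(m)$ interval work plus $O(n)$ scan work per cell, and hence $O(Bn(n+m)) = O(Bnm)$ total (using $m=O(n^2)$ to recover the stated $O(Bn^3)$, or padding by zero-weight intervals if one wants $m\ge n$).

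The main obstacle to watch out for is verifying exact gain-accounting, i.e., that the telescoping sum $\sum_i \mathrm{gain}(v_{i-1},v_i)$ equals the total weight of pierced intervals without double-counting or omission. This is secured by the strict inequality $u < l_j$ in the definition of $\mathrm{gain}$, which partitions pierced intervals by their leftmost piercing sensor; everything else is bookkeeping (costs, bucketing by $l_j$, and handling the sentinel $v=0$).
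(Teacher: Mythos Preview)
Your reduction to weighted interval piercing is exactly what the paper does, and your DP is correct; the credit-to-leftmost-piercing-sensor argument cleanly prevents double counting. However, your DP is organized differently from the paper's. You keep a two-dimensional table $f(v,b)$ indexed by the rightmost sensor position and the budget, and for each cell you scan over all candidate previous sensor positions $u$, maintaining $\mathrm{gain}(u,v)$ incrementally. The paper instead sorts intervals by right endpoint and uses a three-dimensional table $opt[\ell,v,b]$ indexed additionally by the number of leftmost intervals under consideration; its recurrence has constant-time transitions, using a precomputed pointer $pr[v]$ to the last interval lying entirely to the left of $v$ together with $val[\ell,v]$, the total weight among the first $\ell$ intervals that a sensor at $v$ would pierce.

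The tradeoff is that the paper's formulation gets $O(Bnm)$ directly via $O(Bnm)$ cells at $O(1)$ each, whereas your $O(nB)$ cells at $O(n+m)$ each give $O(Bn(n+m))$, which matches $O(Bnm)$ only when $m=\Omega(n)$; your padding remark acknowledges this, and the $O(Bn^3)$ headline bound is unaffected either way. On the other hand, your state space is smaller and the recurrence is arguably more transparent, since it tracks sensor positions rather than an auxiliary interval index. Both handle negative interval weights, which matters for the later Bridges application.
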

\begin{proof}
We give a dynamic programming solution in Algorithm \ref{alg:dp}. 
We compute an optimal solution using a table $opt[\ell,\hat{v},b]$
that stores the optimal solution restricted to the $\ell$ left-most intervals, nodes $1...\hat{v}$ and budget $b$.
We first compute the value of node $v$ restricted to the first $\ell$ intervals,
i.e., $val[\ell,v]$ is the sum of the weights of those intervals when the only sensor is node $v$.
Each subproblem solution is computed in constant time: given inputs $\ell,v,b$, 
if $v$ is not chosen, then the optimal solution value is the same as inputs $\ell,v-1,b$; 
if $v$ is chosen, then the optimal solution value is the value of choosing $v$ in this situation, 
plus optimal solution on the intervals lying to the left of $v$, 
using the first $v-1$ nodes and a budget of $b-c_v$ (or 0 if $b-c_v<0$).

Proof of correctness is by induction: if node $v$ is chosen, then due to the linear ordering, nodes prior to $v$ only contribute to piercing intervals $1$ through $pr(v)$.
Note that correctness holds also when interval weights may be negative.
\qed
\end{proof}

\begin{algorithm}[t!]
\caption{Budgeted Interdiction DP for Evaders on the Path Graph} \label{alg:dp}
\begin{algorithmic} [1]
\STATE sort the $O(n^2)$ intervals by right endpoint
\STATE $pr[v]$ = index of the last interval lying before node $v$, or 0 if none \textbf{for} every $v$
\STATE $val[\ell,v]$ = value of node $v$, restricted to intervals 1 to $\ell$, \textbf{for} every $v,\ell$
\STATE $opt[0,v,b]$ = 0 \textbf{for} every $v,b$
\STATE $opt[\ell,0,b]$ = 0 \textbf{for} every $\ell,b$
\STATE $opt[\ell,v,0]$ = 0 \textbf{for} every $\ell,v$
\FOR {$b$ = 1 to $B$}
	\FOR {$\ell$ = 1 to $m$}
		\FOR {$v$ = 1 to $n$}
      			\STATE $opt[\ell,v,b] = \max\{opt[\ell,v-1,b],~ val[\ell,v] + opt[pr[v], v-1, \max(0,b-c_v)]\}$
		\ENDFOR
	\ENDFOR
\ENDFOR
\STATE \textbf{return} $opt[m,n,B]$
\end{algorithmic}
\end{algorithm}

The case of nondeterministic evaders is more complicated since, as noted above, it gives rise to sequences of suffix intervals and sequences of prefix intervals. For each such sequence corresponding to a single nondeterministic evader, the computation of $val[\ell,v]$ will be based on all the intervals in the sequence that $v$ pierces. More precisely, let $\{[1,t),[2,t),...,[s,t)\}$ be a suffix sequence for some nondeterministic evader $e_i$ with source $s$ and target $t$. For each node $v<t$ there is some probability $p_v$ that placing a sensor at node $v$ {\em suffices} for capturing $e_i$. Namely, $p_v$ is 1 for any $v \in [s,t)$, while for each node $v<s$ the probability $p_v$ can be computed based on the Markov chain of $e_i$, that is,
just the probability that her Markov chain visits $v$ and is computed as follows.  
For $e_i$'s Markov chain $(\mathbf{a}, \mathbf{M})$, let $\mathbf{M_{-v}}$ denote a transition matrix where row $v$ has been replaced by zeros,
i.e. the chain with $v$ as a killing state.
Then $p_v = \left({\bf \mathbf{a}}\left[\mathbf{I}-\mathbf{M_{-v}}\right]^{-1}\right)_{v}$.

For each interval in the sequence, we now define a {\em marginal} probability $\hat p_v$ as follows: $\hat p_1 = p_1$; $\hat p_v =  p_v - p_{v-1}$ for $1 < v \le s$, and $\hat p_v = 1 - p_{s-1}$ for $s \le v < t$. By construction, the $\hat p_v$ values for all intervals containing a given node $u$ will sum to exactly the probability of evader $e_i$ reaching node $u$, and hence of such a sensor placement sufficing to capture evader $e_i$. (The values labeling the intervals in Fig. \ref{fig:stabbing} are the marginal probabilities, weighted by the probability of choosing their starting points.) Marginal probabilities are assigned to prefix intervals similarly. Therefore the value of a set of sensor locations for a given instance of the problem with nondeterministic evaders is exactly the value of those locations for the resulting problem instance with interval sequences of deterministic evaders; that is, the nondeterministic problem reduces to the deterministic problem (albeit with up to a factor $n$ more intervals). Thus we have the following.

\begin{theorem}\label{thm:nondeterm}
Budgeted Interdiction with nondeterministic evaders is optimally solvable on the path graph, in time $O(B n^2 m) = O(B n^4)$.
\end{theorem}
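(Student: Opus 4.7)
The plan is to reduce the nondeterministic problem to an instance of the deterministic problem solved by Theorem \ref{thm:detdp}, by replacing each Markov-chain evader with a family of nested weighted intervals whose marginal weights are calibrated so that, for every sensor placement, the total weight of pierced intervals equals the true capture probability. Once the reduction is established, the deterministic DP of Algorithm \ref{alg:dp} is invoked as a black box.

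Concretely, for each evader $e_i$ with source $s$ and target $t$ (assume $s < t$; the mirror case is symmetric), first compute the hitting probabilities $p_v = \bigl(\mathbf{a}[\mathbf{I} - \mathbf{M}_{-v}]^{-1}\bigr)_v$ for every $v < t$ via the killing-state formula already introduced. Then form the suffix sequence $\{[1,t),[2,t),\ldots,[s,t)\}$ and attach to each interval $[k,t)$ the marginal weight $\hat p_k$ defined in the theorem setup ($\hat p_1 = p_1$, telescoping differences $p_k - p_{k-1}$ in the interior, and $\hat p_s = 1 - p_{s-1}$ at the boundary). Apply the same construction symmetrically to produce a prefix sequence for each source lying to the right of $t$, and aggregate over all evaders and starting points.

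The crux, and the step I expect to require the most care, is verifying that for an \emph{arbitrary} sensor set $R$ the sum of $\hat p_k$ over intervals pierced by at least one sensor in $R$ equals the true probability that $e_i$'s chain visits $R$ before absorption at $t$. The path-graph geometry makes this work cleanly: from $s$, the chain cannot reach a node $v < s$ without first passing through every node in $(v, s)$, so among all sensors in $R \cap [1, s)$ only the rightmost one $v^\ast$ contributes, and the intervals it pierces telescope exactly to $p_{v^\ast}$; meanwhile, any sensor in $R \cap [s, t)$ captures deterministically with probability $1$ and pierces the entire sequence whose marginals telescope to $1$, subsuming all other contributions. Checking this case analysis against the Markov hitting-probability semantics is the one place the proof can go wrong, and it is important that intervals be nested (so pierced subsets are always prefixes of the sequence) for the telescoping identity to apply.

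Finally, the running time follows from counting. Each source--target pair, of which there are $m$, now spawns $O(n)$ nested intervals instead of a single one, so the transformed instance has $m' = O(nm)$ intervals. Plugging $m'$ into the $O(Bnm)$ bound of Theorem \ref{thm:detdp} yields $O(Bn \cdot nm) = O(Bn^2 m) = O(Bn^4)$, as claimed.
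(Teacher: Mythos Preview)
Your proposal is correct and follows the same reduction as the paper: replace each nondeterministic evader by a nested suffix/prefix sequence of intervals carrying the telescoping marginal weights $\hat p_v$, then invoke the deterministic DP of Theorem~\ref{thm:detdp} on the resulting $O(nm)$ intervals to obtain the $O(Bn^2m)=O(Bn^4)$ bound. Your explicit verification that the telescoping identity holds for an \emph{arbitrary} sensor set $R$ (via the rightmost-sensor argument exploiting nestedness) is in fact more careful than the paper's own justification, which states the identity only for a single sensor location.
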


These problems can also be solved on the cycle by reduction to path graphs.
\begin{theorem}
Full Interdiction is optimally solvable in $O(n^2)$ time on the cycle graph. Budgeted Interdiction with deterministic or nondeterministic evaders and budget $B$ is optimally solvable on the cycle graph in time $O(Bn^4)$ or $O(Bn^5)$, respectively.
\end{theorem}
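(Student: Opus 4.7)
The plan is to reduce each cycle problem to $n$ path problems by guessing which node of the cycle lies in the optimal sensor set. For each node $v \in V(C)$, I consider the scenario in which $v$ is chosen: pay $c_v$ (or count $1$ for unit-cost Full Interdiction) and observe that the residual problem lives on the path $P_v = C \setminus \{v\}$. In the deterministic case, every evader arc on $C$ either contains $v$ (so the evader is fully captured by the sensor at $v$) or lies entirely in $P_v$ as an interval that must still be pierced. In the nondeterministic case, installing a sensor at $v$ effectively turns $v$ into a killing state of each evader's Markov chain, and the residual capture probabilities on $P_v$ are computed via the marginal-probability construction of Theorem \ref{thm:nondeterm}. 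Taking the best of the $n$ reduced solutions (and comparing against the empty sensor set) yields the optimal cycle solution.

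For Budgeted Interdiction, each of the $n$ iterations invokes Theorem \ref{thm:detdp} or \ref{thm:nondeterm} on $P_v$ with budget $B - c_v$, at cost $O(Bn^3)$ or $O(Bn^4)$, respectively; multiplication by $n$ yields the claimed $O(Bn^4)$ and $O(Bn^5)$ bounds.

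For Full Interdiction, a naive $n$-fold invocation of Theorem \ref{thm:optonpaths} would give $O(n^2 \log n)$. To shave the logarithmic factor, I would amortize the preprocessing across the $n$ iterations: the $O(n)$ smallest arcs (one per direction per evader source) are sorted once in $O(n \log n)$ time, and for each guess $v$ the surviving arcs already appear in sorted order along $P_v$, so linear-time minimum clique cover on the induced interval graph \cite{EvenLRSSS08} completes each iteration in $O(n)$. The total is $O(n \log n + n \cdot n) = O(n^2)$.

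The main subtlety will be the nondeterministic reduction, since the reachable subgraph may wrap around the entire cycle; I would need to verify that modifying the Markov chain to kill at $v$ and then applying the path-marginal construction exactly captures the probability of capture by any sensor placed in $P_v$ conditional on escaping $v$, and that prefix/suffix sequences on $P_v$ arise correctly depending on which direction the evader approaches her target along the broken cycle. Correctness of the guessing argument also relies on the optimal solution being nonempty, which holds automatically for FI whenever at least one evader is present and is handled for BI by an explicit comparison against the trivial zero-cost placement.
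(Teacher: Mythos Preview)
Your approach is essentially the same as the paper's: both reduce the cycle to $n$ path instances by guessing a node $v$ in the optimal sensor set, cutting the cycle there, and solving the residual path problem. You are more explicit than the paper about two points the paper glosses over---how to get $O(n^2)$ rather than $O(n^2\log n)$ for Full Interdiction (you amortize the sort; the paper simply asserts ``linear time'' per path instance), and what exactly happens to a nondeterministic evader's Markov chain when $v$ is fixed---but the underlying reduction is identical.
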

\begin{proof}
For the minimization problem, we reduce to a collection of $n$ path graph instances, corresponding to $n$ ways to ``cut'' the cycle graph, as follows. 
For each node $v \in V$, consider placing a sensor at node $v$. 
It will pierce some set of intervals, 
with the effect that none of the remaining intervals to pierce include node $v$, 
yielding a path graph instance with nodes $v+1,...,n,1,...,v-1$. 
Solve each resulting path graph instance in linear time, and return the cheapest solution (combined with $v$). 
The budgeted problems are solved by a similar reduction. \qed
\end{proof}
The process can be generalized to Full Interdiction on arbitrary graphs containing $c$ cycles, though at a cost of $O(n^c)$: find all the cycles \cite{Johnson75} and then explore all possible cuts.

\subsection{General graphs}
We show that Budgeted Interdiction (BI) is hard already with one Markovian evader.
This improves on the result in \cite{Gutfraind09umek} which held for two or more evaders.
\begin{theorem}\label{thm:singleev}
Budgeted Interdiction is NP-hard even with a single Markovian evader.
\end{theorem}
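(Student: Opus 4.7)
The plan is to reduce from the NP-hard Knapsack problem. Given a Knapsack instance with $n$ items of positive integer weights $w_i$, values $v_i$, capacity $W$, and target value $V^*$, I will construct a Budgeted Interdiction instance on the node set $\{u_1,\ldots,u_n,t\}$ with a single directed edge $u_i\to t$ for each $i$. The single Markovian evader is specified by the pair $(\mathbf a,\mathbf M)$ with initial distribution $a_{u_i}=v_i/V$ (where $V=\sum_j v_j$), deterministic transitions $M_{u_i,t}=1$, and $t$ as the absorbing target. The sensor costs are $c_{u_i}=w_i$ and $c_t=W+1$; the budget is $B=W$ and the capture-probability threshold is $V^*/V$.

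The key observation will be that for any feasible sensor set $S\subseteq\{u_1,\ldots,u_n\}$, the evader is captured exactly when her starting node lies in $S$: from any $u_i$ she must leave along its unique outgoing edge to reach $t$, and by the node-interdiction convention a sensor at $u_i$ intercepts her as she does so. Since the starting events are mutually exclusive, the capture probability reduces to $\frac{1}{V}\sum_{u_i\in S}v_i$, while the budget constraint $\sum_{u_i\in S}w_i\le W$ is exactly Knapsack's capacity constraint. Hence the BI decision instance is a yes-instance iff Knapsack admits a packing of value at least $V^*$. The reduction is polynomial, so the (weak) NP-hardness of Knapsack transfers to Budgeted Interdiction with a single Markovian evader, strengthening the two-evader hardness of \cite{Gutfraind09umek}.

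The obstacles I anticipate are minor: I must verify that the construction qualifies as a single Markovian evader in the sense of Section~2 and that a sensor placed at $t$ cannot subvert the argument. The former holds because the paper's definition of a Markovian evader only requires a pair $(\mathbf a,\mathbf M)$, imposing no nondeterminism condition on $\mathbf M$. The latter is handled both by setting $c_t=W+1$ (making a sensor at $t$ infeasible) and by the convention that a sensor on a target node does not protect the target itself, so it would in any case contribute nothing to the capture probability. All combinatorial difficulty is thereby loaded into the single evader's initial distribution, which is precisely what allows one Markovian evader to suffice in place of the two required by the earlier proof.
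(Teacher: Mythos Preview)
Your reduction is correct and does establish the theorem as literally stated, but it takes a substantially different---and weaker---route than the paper.

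The paper reduces from \textsc{Vertex Cover} rather than \textsc{Knapsack}. Given a VC instance $(G,B)$, it adds a target $t$ adjacent to every vertex; the single evader starts uniformly at a non-target node and, from any node $v$, moves to $t$ with probability $\tfrac12$ and otherwise to a uniformly chosen neighbor of $v$ in $G$. All sensor costs are $1$. A size-$B$ vertex cover then yields total capture probability exactly $\tfrac{n+B}{2n}$, whereas any non-cover leaves an uncovered edge $(u,v)$ from whose endpoints the evader escapes with probability strictly exceeding $\tfrac12$, pushing the total strictly below that threshold.

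The contrast matters. Your argument loads all the combinatorial difficulty into the integer sensor costs $c_{u_i}=w_i$ and the capacity $W$; the Markov chain itself is a trivial one-step jump to $t$. What you are really showing is that the \emph{budget constraint alone} makes BI weakly NP-hard---your instance admits a pseudo-polynomial algorithm, and the hardness evaporates entirely under unit costs. The paper's reduction, by contrast, uses unit costs and polynomially bounded transition probabilities, so it yields \emph{strong} NP-hardness and demonstrates that the difficulty is intrinsic to the interaction between sensor placement and a single evader's random walk on the graph---which is precisely the content of the improvement over the two-evader result of \cite{Gutfraind09umek}. Your proof is more elementary; the paper's is materially sharper.
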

\begin{proof}
We reduce from Vertex Cover (VC) to the decision problem of determining whether 
the interdiction probability $J$ can be raised to a certain threshold using at most $B$ sensors. 
Given a VC problem instance, i.e., a graph $G$ on $n$ nodes and an integer $B$,
we construct a network interdiction (NI) instance with a Markovian evader on a graph $G'$. 
The graph $G'$ extends graph $G$ by adding a target node $t$, 
which is made adjacent to all other nodes. 
We define the evader $e$ thus. 
Each node corresponds to a state of its Markov chain. 
All non-target nodes are equally likely to be chosen as $e$'s start node. 
When at a given node $v$, $e$ moves to the target $t$ with probability $50\%$; 
otherwise, $e$ moves to one of $v$'s other neighbors, chosen uniformly at random. 

For a particular solution, let the {\em profit} for a node be the probability of interdiction if the evader starts at that node.
We will now show that the VC instance admits a vertex cover of size $B$ iff the NI instance admits a size-$B$ solution of profit at least $B+(n-B)/2$.
Note that an overall interdiction probability of $\frac{(n+B)}{2n}$ is the same as a total profit of $(n+B)/2 = B+(n-B)/2$ over all nodes.

First assume there is a size-$B$ vertex cover $C$ of $G$. 
Then an NI solution with sensors placed at all the nodes in $C$ will have profit $B+\frac{(n-B)}{2}$:
1 for each of the $B$ nodes in $C$ plus 1/2 for each of the remaining $n-B$ nodes, 
since for any node $v$ not in $C$, all $v$'s neighbors in $G$ must be in $C$.

Now assume there is no size-$B$ vertex cover, and consider a set $S$ of $B$ nodes, a set which must fail to cover some edge. 
Again for each of the $B$ nodes in $S$ we have profit 1. 
{Every} other node $v$ will have profit at most 1/2, since without its own sensor, 
an evader starting at $v$ goes directly to $t$ with probability 1/2. 
But now consider an edge $(u,v)$ that is left uncovered by $S$. 
The evasion probability when starting at $u$ is greater than 1/2---at least $1/2 + 1/(4\deg(G))$---since if $e$ reaches node $v$, 
it now has a second chance to move to $t$, and so the profit of $u$ is less than 1/2. 
Therefore the total profit is strictly less than $B+(n-B)/2$. 
\qed
\end{proof}

It follows from the hardness proof of \cite{Gutfraind09umek} that Full Interdiction is NP-hard with 2 evaders. 
It does not remain hard when limited to a single evader, however.

\begin{theorem}\label{thm:mincutred}
Full Interdiction with one evader is solvable in polynomial time.
\end{theorem}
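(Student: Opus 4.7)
The plan is to reduce Full Interdiction with a single evader to a classical minimum-cost node cut problem, which is solvable in polynomial time via max-flow/min-cut. As noted in the preliminaries, the single evader $e_1 = (\mathbf{a}, \mathbf{M})$ is captured with probability 1 iff her target $t$ is separated from every source node (every $v$ with $\mathbf{a}_v > 0$) within her induced subgraph $G_1$, where $G_1$ consists of the edges with positive transition probability in $\mathbf{M}$ reachable from a source. So the task reduces to: find a minimum-cost set $S \subseteq V(G_1) \setminus \{t\}$ whose removal (or whose out-edge removal, which is equivalent for this purpose) disconnects all sources from $t$ in $G_1$.

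First I would make the source side uniform by introducing a super-source $s^{*}$ with infinite-cost arcs into every node in the support of $\mathbf{a}$. I would also set the cost of $t$ to $\infty$ so that $t$ can never appear in the cut (consistent with the convention that a sensor on $t$ does not protect $t$ itself). Every other node $v$ keeps its cost $c_v$. Then a feasible full-interdiction solution corresponds exactly to a finite-cost $s^{*}$--$t$ node separator in $G_1 \cup \{s^{*}\}$, and vice versa, with costs preserved.

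Next I would compute a minimum $s^{*}$--$t$ node cut in this auxiliary graph by the standard node-splitting construction: replace each original node $v$ by two copies $v^{-}, v^{+}$ joined by a directed arc of capacity $c_v$; reroute every in-edge of $v$ to $v^{-}$ and every out-edge from $v^{+}$, each with infinite capacity. A maximum flow from $s^{*,+}$ to $t^{-}$ yields, by LP duality / the max-flow min-cut theorem, a minimum-cost set of interior arcs whose removal disconnects the two sides; these correspond to the nodes of an optimal sensor placement. The whole procedure runs in strongly polynomial time.

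The main obstacle, and really the only subtlety, is justifying the equivalence between ``capture with probability 1'' and node-separation in $G_1$. One direction is clear: if $S$ separates the sources from $t$ in $G_1$, then every sample trajectory of the Markov chain lies in $G_1$ and must hit some $v \in S$ before any neighbor outside $v$'s blocked out-edges, hence is interdicted. The other direction requires noting that if some source-to-$t$ walk in $G_1$ avoids $S$ entirely except possibly at $t$, then this walk has strictly positive probability under $(\mathbf{a}, \mathbf{M})$ and the evader reaches $t$ uninterdicted with positive probability; this uses only the definition of $G_1$ as the support of the chain and the fact that a sensor at $t$ does not protect $t$. Once this correspondence is in place, the polynomial-time algorithm follows immediately from standard minimum-cut techniques.
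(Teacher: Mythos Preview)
Your proposal is correct and follows essentially the same route as the paper: add a super-source adjacent to all start nodes, restrict to the reachable support of the chain, and reduce Full Interdiction to a minimum $s^{*}$--$t$ vertex cut via the standard node-splitting construction for directed min-cut. Your handling is in fact a bit cleaner---you enforce that only node-arcs are cut by assigning infinite capacity elsewhere, whereas the paper allows arbitrary cut edges and then post-processes them---and you spell out the ``probability~1 $\Leftrightarrow$ separation in $G_1$'' equivalence more carefully than the paper does.
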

\begin{proof}
We solve the problem by reducing to a Min Cut problem. 
Given a set of routes specifying the evader's behavior, we introduce a source node $s$ pointing to all start nodes of its Markov chain. 
All edges that the evader has zero probability of reaching and crossing are removed from the graph $G$. 
Any unreachable nodes are also removed. 
Now, in order to interdict the evader before they reach $t$, we must delete vertices in order to separate $s$ from $t$ in $G$. 
It is well known that this Min Vertex Cut problem can be solved in polynomial time, 
by reduction to Directed Min Cut, as follows \cite{Even79}. 
First replace any undirected edge with a pair of directed edges. 
Then replace each node $v$ (other than $s$ or $t$) with a pair of nodes and directed edge $(v_a,v_b)$, 
where each edge directed to $v$ is now directed to $v_a$ and each edge directed from $v$ is now directed from $v_b$. We compute a Min Cut on the resulting graph $G'$. 
If any edge is chosen that does not correspond to a node in $G$, 
we can substitute one of the edges corresponding to its two vertices (if one of these is the target, then the non-target node is chosen). 
The resulting modified Min Cut solution to $G'$  will correspond to a Min Vertex Cut solution to $G$, 
and moreover to a Full Interdiction solution.
\qed
\end{proof}

We now turn to approximation algorithms for the general setting, by relating interdiction to the Set Cover and Maximum Coverage problems. It was shown in \cite{Gutfraind09unreactive} that weighted Budgeted Interdiction with any number of Markovian evaders is 1-1/e-approximable, which is the optimal factor (see Appendix \ref{app:hardness}).

Identifying nodes and route sets with elements and sets in the Hitting Set problem yields a reversible reduction, 
and hence the following immediately results:

\begin{corollary}
Full Interdiction is hard to approximate
with factor $(1-\epsilon)\ln m$ for any $\epsilon>0$, assuming $NP \subseteq DTIME(m^{O(\log \log m)})$ but can be approximated with factor $H_m$ in time polynomial in $n+m$.
\end{corollary}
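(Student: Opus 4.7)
The plan is to establish an approximation-preserving equivalence between Full Interdiction (FI) and Hitting Set (HS), identifying nodes of $G$ with elements and route sets $V_r$ with sets to be hit, and then appeal to classical Set Cover / Hitting Set results for the two halves of the corollary.

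For the forward reduction (FI $\to$ HS), I would argue that a sensor set $S \subseteq V$ captures every evader with probability $1$ iff every positive-probability route $r$ from some source to a target satisfies $S \cap V_r \neq \emptyset$, iff $S$ is a hitting set for the family $\{V_r\}$ of $m$ distinct route sets. Costs coincide node-for-node. For the reverse reduction (HS $\to$ FI), given an HS instance on universe $U$ with sets $S_1,\ldots,S_m$, I would construct a digraph $G'$ on vertex set $U \cup \{t\}$: for each $S_i = \{u_{i,1},\ldots,u_{i,k_i}\}$, fix an arbitrary ordering and add the directed path $u_{i,1} \to \cdots \to u_{i,k_i} \to t$, and associate with it a deterministic evader $e_i$ starting at $u_{i,1}$ with target $t$. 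The route set of $e_i$ is exactly $S_i$, so FI-feasible sensor placements are in cost-preserving bijection with hitting sets, and the FI parameter $m$ equals the HS set count.

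The hardness claim then follows from Feige's $(1-\epsilon)\ln N$ inapproximability of Set Cover on universe size $N$ under $NP \subseteq DTIME(N^{O(\log \log N)})$: swapping elements and sets transfers this to $(1-\epsilon)\ln m$ hardness for Hitting Set parameterized by the number of sets, which the reverse reduction ports to FI (even in the restricted case of deterministic evaders). For the approximation half, I would run greedy on the explicit family $\{V_r\}$, at each step selecting the node lying in the largest number of still-unhit route sets. The textbook dual-fitting analysis of greedy Set Cover, with the $m$ route sets now playing the role of universe elements being covered, yields ratio $H_m$. Each of at most $n$ iterations scans the $n \times m$ node/route-set incidence structure, so the total running time is polynomial in $n+m$.

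The only nontrivial point is arranging that the FI instance's route-set count exactly matches the HS instance's set count so that the $\ln m$ hardness is not inflated; the direct one-evader-per-set construction above handles this, and also shows that the hardness already holds under the restriction to deterministic evaders.
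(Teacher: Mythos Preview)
Your proposal is correct and follows essentially the same approach as the paper: exhibit a reversible, approximation-preserving correspondence between Full Interdiction and Hitting Set / Set Cover (nodes $\leftrightarrow$ elements to pick, route sets $\leftrightarrow$ sets to hit), and invoke Feige's $(1-\epsilon)\ln N$ lower bound together with the greedy $H_m$ upper bound. The only cosmetic difference is that the paper phrases the hardness reduction directly from Set Cover (``a node for each set and a route set for each element''), whereas you go through the dual Hitting Set formulation; via the standard Set Cover / Hitting Set duality these are the same reduction, and your care in matching the route-set count to the HS set count is exactly what makes the $\ln m$ parameter line up.
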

\begin{proof}
We reduce from Set Cover, as in \cite{Gutfraind09unreactive}, creating a node for each set and a route set (with a corresponding deterministic evader) for each element. 
\qed
\end{proof}

\section{Reactive innocents and the Bridges problem}

Since maximizing the ``net flow'' $TN-FN$ \cite{Rubinstein} turns out to be as hard to approximate as Maximum Independent Set (see Appendix \ref{app:hardness}), we focus primarily on the the min-error $FP+FN$ setting. A geometric or ``convex'' version of the min-error problem is optimally solvable, however. Since the two objective functions differ only by a constant and a negation ($TN-FN = (w_N-FP)-FN$, where $w_N$ indicates the total value of all goods), the same holds for the net flow problem.

\subsection{Convex bridge sets\label{subsec:convex}}

\begin{definition}
An instance of the Bridges Problem is {\em convex} if the bridges can be ordered so that if two bridges $x$ and $y$ are accessible to a person $p$ then any bridge $z$ with $x < z < y$ is accessible to $p$ as well.
\end{definition}
The problem example shown in Figure~\ref{fig:bridges} in the introduction is convex. We assume that the indices of people are sorted in order of their positions from left to right and the bridge indices are sorted in order of their rightmost accessing person.
This setting can be solved by mapping it to Budgeted Interdiction on the path graph and adapting Algorithm \ref{alg:dp}.
\begin{corollary}
The convex Bridges problem is solvable in time $O(n|N|+n|D|) = O(n^3)$.
\end{corollary}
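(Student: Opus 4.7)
The plan is to reduce the convex Bridges problem to the path-graph Budgeted Interdiction problem handled by Algorithm \ref{alg:dp}, and then strip away the budget dimension of its DP to obtain the sharper running time claimed.

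First I would identify the $n$ bridges with the $n$ nodes of a path. By the convexity hypothesis each person $p$'s accessible bridge set is an interval $[l_p, r_p]$, and interpreting ``bridge is open'' as ``node carries a sensor'' makes $p$'s success synonymous with having her interval pierced. I would then assign signed weights to these intervals---$+w_p$ for a good $p$ and $-w_p$ for a bad $p$---so that the total weight of the pierced intervals equals $W_N - (FP+FN)$, where $W_N$ is the (constant) total weight of all goods; maximizing this sum is therefore the same as minimizing $FP+FN$.

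Second I would run a stripped-down version of Algorithm \ref{alg:dp}: drop the budget axis (equivalently, set every $c_v = 0$) so that the recurrence collapses to $opt[\ell,v] = \max\{opt[\ell,v-1],\ val[\ell,v] + opt[pr[v],v-1]\}$, with the answer read off at $opt[m,n]$, where $m = |N|+|D|$. Correctness is inherited from Theorem \ref{thm:detdp} via the same induction; the remark there that the DP tolerates negative interval weights is precisely what lets us accommodate the $-w_p$ weights coming from bads. The running-time bound then falls out automatically: the DP table has $O(nm) = O(n|N| + n|D|)$ entries and each is filled in $O(1)$ time after an $O(nm)$ preprocessing pass to build $val$ and $pr$; since there are at most $O(n^2)$ distinct intervals on $n$ bridges, this is $O(n^3)$.

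The only real obstacle is the bookkeeping in the reduction---making sure that ``pierced'' on the BI side really does correspond to ``succeeds'' (rather than ``is captured'') on the Bridges side, and that the sign conventions for goods versus bads line up so that the resulting maximization really does solve the original $FP+FN$ minimization. Once these are pinned down, everything else is routine given Theorem \ref{thm:detdp}.
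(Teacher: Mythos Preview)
Your proposal is correct and follows essentially the same approach as the paper: identify bridges with path nodes, map each traveler to a signed-weight interval (positive for goods, negative for bads), invoke the budget-free variant of Algorithm~\ref{alg:dp}, and read off the $O(nm)=O(n|N|+n|D|)\le O(n^3)$ bound. Your explicit computation that the pierced-interval weight equals $W_N-(FP+FN)$ is slightly more detailed than the paper's, but the reduction and the appeal to Theorem~\ref{thm:detdp}'s tolerance of negative weights are identical.
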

\begin{proof}
Given a Bridges problem instance (say in the min-error formulation), we introduce a Budgeted Interdiction instance (with budget arbitrarily large) as follows.
Each of the bridges is identified with a node on the path graph.
For each traveler $p$ we define an evader $p$ on the interval $I_p$, where $I_p $ are all bridges available to $p$.
This produces $|N|+|D| \le n(n-1)$ distinct intervals.
The weight of evader $p$ is set to negation of the traveler's cost: $w_p = -\hat w_p$.

The formulations are now equivalent: in the Bridges Problem, a traveler succeeds iff one or more of her bridges is open; 
in BI, an evader is interdicted iff one or more of the nodes in her interval is interdicted by a sensor.

Then we pass the instance to an adaptation of Algorithm \ref{alg:dp}:
we remove the budget dimension from the dynamic programming table
and also remove the outer loop iterating over budget values, saving a factor of $O(b)$ in running time. 
The resulting algorithm computes an optimal interdiction solution.
(Recall that Algorithm \ref{alg:dp} supports intervals with weights both positive and negative.)
Given this solution, we then solve the Bridges problem by opening a bridge iff the corresponding node has a sensor placed at it.
\qed
\end{proof}

\subsection{The Min-error $FP+FN$ setting}

NP-hardness of optimally solving the min-error setting follows from the hardness of the net-flow setting: maximizing $TN-FN$ is the same as minimizing $FN-TN = FN - (|N|-FP)=FP+FN-|N|$. The hardness of approximation properties, however, are not the same. In fact, the min-error problem is precisely the Positive-Negative Partial Set Cover
Problem \cite{Miettinen08}, which, as a generalization of Red-Blue Set Cover, is strongly inapproximable (hard to approximate with factor $\Omega(2^{\log^{1-\epsilon}}m))$ (where $m$ is the number of sets) unless $NP \subseteq DTIME(m^{\text{polylog}(m)})$ though approximable with factor $2\sqrt{(m+\pi) \log \pi}$, where $\pi$ is the number of goods.

Glazer \& Rubinstein define what we will call a {\em claw} as an object $c$ consisting of a good $g_c$ and minimal set of bads $B_c$ such that for each bridge $s \in \sigma(g_c)$, $s$ is also in $\sigma(b_i)$ for some bad $b_i \in B_c$, which means that in any consistent solution, either $g_c$ must fail or at least one $b_i$ must succeed. They show that this is also a sufficient condition for being a valid solution, and hence obtain a
Set Cover problem: for each claw, choose a person to err on, with minimum total error cost over all claws. Unfortunately, this instance in general has exponentially many constraints (since for each good $g$ with bridge set $\sigma(g)$, each of whose bridges admit some number $bads(s)$ of bads, there will be $|C| = \Pi_{s \in \sigma{g}} bads(s)$ many claws), and so the $O(\log |C|)$ approximability of set cover becomes trivially weak. We therefore modify the definition of claw slightly as follows.

\begin{definition}
A {\em claw} is an object $c$ consisting of a good $g_c$ and, for each bridge $s_i \in \sigma(g_c)$ the set $b \in \sigma^{-1}(s_i)$ of all the bads who can use bridge $s_i$.
\end{definition}

Each claw $c$ therefore imposes the following constraint: in any valid solution, either $g_c$ must fail or all the bads in $\sigma^{-1}(s_{i})$ for some $s_i \in \sigma(g_c)$ must succeed. Given $c$, let a {\em kill move} be the action of killing $g_c$; let an {\em open bridge move} be the action of opening some bridge $s_{i}$. Now we can interpret this problem as an instance of Submodular Cost Set Cover \cite{KoufogiannakisY09,IwataN09} in which the elements are claws and there are two kinds of sets. For each possible kill move $m_g$, introduce a set $M_g = \{g\}$;
for each possible open bridge move $m_{{i}}$, introduce a set $M_{i}$ 
consisting of all the claws that opening bridge $i$ would satisfy. 
There are $N$ elements (claws) and $N+m$ sets (moves).

\begin{theorem}\label{thm:maxapprox}
The general $FP+FN$ Bridges problem is $(1+\max_{g \in N} |\sigma(g)|)$-approximable.
\end{theorem}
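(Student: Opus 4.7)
The plan is to invoke the Submodular Cost Set Cover (SCSC) formulation set up just before the theorem, and apply the $f$-approximation of \cite{KoufogiannakisY09,IwataN09}, where $f$ is the maximum frequency of any element in the instance.

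First I would check the cost structure. Given a choice $S$ of moves, its cost is $\sum_{g : M_g \in S} \hat w_g + \sum_{b \in B(S)} \hat w_b$, where $B(S)$ is the set of bads who can use at least one bridge whose open-bridge move lies in $S$. The first summand is modular and the second is a weighted coverage function of the opened bridges, hence monotone submodular, so the total cost is monotone, nonnegative, and submodular, as required by SCSC.

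Next I would bound the frequency of an arbitrary claw $c$. The sets that contain $c$ are exactly the single kill move $M_{g_c}$ (each good defines exactly one claw, so no other kill move covers $c$) together with the open-bridge moves $M_{s_i}$ for each $s_i \in \sigma(g_c)$ (since opening any such $s_i$ lets all of $\sigma^{-1}(s_i)$ succeed and thereby satisfies $c$'s constraint). Therefore the frequency of $c$ is $1+|\sigma(g_c)| \le 1+\max_{g \in N}|\sigma(g)|$, so an $f$-approximate SCSC cover costs at most $(1+\max_{g \in N}|\sigma(g)|)\cdot \mathrm{OPT}_{\mathrm{SCSC}}$.

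The step requiring the most care, and the main obstacle, is the two-way translation between SCSC covers and Bridges assignments so that the SCSC optimum lower-bounds the Bridges optimum and the returned cover can be decoded without loss. From an SCSC cover $S$ I would build a Bridges assignment by opening precisely the bridges whose open-bridge moves are in $S$; its $FP+FN$ is at most $c(S)$ because every failing good is $g_c$ for some claw $c$ covered only through its kill move (so $\hat w_g$ is paid), and every succeeding bad lies in $B(S)$. Conversely, any Bridges assignment induces an SCSC cover of equal cost by picking the kill moves of its failing goods and the open-bridge moves of its opened bridges, giving $\mathrm{OPT}_{\mathrm{SCSC}}\le \mathrm{OPT}_{\mathrm{Bridges}}$. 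The only subtle point is that an SCSC cover may wastefully include both $M_{g_c}$ and some $M_{s_i}$ with $s_i \in \sigma(g_c)$; this double payment is harmless slack for the inequality $FP+FN \le c(S)$, and the chain of inequalities then yields the claimed $(1+\max_{g \in N}|\sigma(g)|)$-approximation.
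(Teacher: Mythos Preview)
Your proposal is correct and follows essentially the same route as the paper: cast the problem as Submodular Cost Set Cover over the claws with kill moves and open-bridge moves, verify submodularity of the move cost, compute the element frequency as $1+|\sigma(g_c)|$, and invoke the $f$-approximation of \cite{KoufogiannakisY09,IwataN09}. Your write-up is in fact slightly more careful than the paper's, since you make explicit the reverse direction (any Bridges assignment induces an SCSC cover of the same cost, hence $\mathrm{OPT}_{\mathrm{SCSC}}\le \mathrm{OPT}_{\mathrm{Bridges}}$), which the paper leaves implicit; one small cosmetic point is that the weights $\hat w_g$ for goods are declared negative in the preliminaries, so in your cost expression you should use $|\hat w_g|$ (or $-\hat w_g$) to keep the submodular cost nonnegative, matching the paper's informal ``cost of the specified good failing''.
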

\begin{proof}
First we claim that the cost of a set of moves is submodular. Indeed, the cost of each kill move is simply the additive cost of the specified good failing; the marginal cost of an open bridge move is monotonically decreasing since it is based on the number of {\em additional} bads that opening the bridge then allows to succeed.
Second we claim that the value of the total error of the Bridge solution returned is at most the cost of the moves chosen. Indeed, first, the only time bridges are opened is during bridge moves, and so the total cost of bads succeeding is at most the cost of the open bridge moves; second, when bridges are closed at the end, all constraints have been handled, and so the failures of all goods have already been ``paid for'', in the cost of the kill moves.
Therefore the algorithms of \cite{KoufogiannakisY09,IwataN09} apply, which provide a solution with approximation factor $f$, which is the maximum number of sets that any element appears in. In the constructed set cover instance, $f$ translates into 1 plus $\max_{g \in N} |\sigma(g)|$.
\qed
\end{proof}

\vskip .25cm
\small
\noindent \textbf{Acknowledgments.}
We thank Amotz Bar-Noy and Rohit Parikh for useful discussions. This work was funded by the Department of Energy at 
the Los Alamos National Laboratory through the LDRD program, 
and by the Defense Threat Reduction Agency. Indexed as Los Alamos Unclassified Report LA-UR-11-10123.

\normalsize

\appendix

\section{Other hardness results}\label{app:hardness}

The following two results are approximation-preserving reductions from the Maximum Independent Set (MIS) problem, which is hard to approximate with factor  $n^{1-\epsilon}$  (where $|V|=n$) for any $\epsilon>0$ \cite{Zuckerman07}. A MIS instance consists of a graph $G = (V,E)$ and a positive integer $k$.

\begin{proposition}
The Bridges problem variant in which the goal is to maximize $TN$ subject to a bound on $FN$ is NP-hard to approximate with factor $n^{1-\epsilon}$.
\end{proposition}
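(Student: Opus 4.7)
The plan is to give an approximation-preserving reduction from Maximum Independent Set (MIS) with the $FN$ budget set to zero, so that feasibility forces the set of opened bridges to correspond to an independent set and the objective $TN$ directly counts its size.

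Given an MIS instance $(G=(V,E),k)$, I would construct the Bridges instance as follows. Introduce one bridge $s_v$ for each vertex $v\in V$. For each vertex $v$, introduce a good $g_v$ with bridge set $\sigma(g_v)=\{s_v\}$, so $g_v$ succeeds iff $s_v$ is open. For each edge $(u,v)\in E$, introduce a bad $b_{uv}$ with bridge set $\sigma(b_{uv})=\{s_u,s_v\}$, so $b_{uv}$ succeeds iff at least one of $s_u,s_v$ is open. Set the bound on $FN$ to $0$.

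With this bound, a feasible solution must close at least one endpoint of every edge, i.e., the set $I=\{v:s_v \text{ open}\}$ is an independent set in $G$. Conversely, every independent set $I$ in $G$ yields a feasible Bridges solution (open exactly the bridges $\{s_v:v\in I\}$) with $FN=0$ and $TN=|I|$, since each good $g_v$ succeeds precisely when $s_v$ is open. Hence the optimum $TN$ equals the size of a maximum independent set in $G$, and any $\alpha$-approximation to max-$TN$ yields an independent set of size at least $\mathrm{OPT}(G)/\alpha$, giving an $\alpha$-approximation to MIS. Since the constructed instance uses $n'=|V|$ bridges (as well as $|V|$ goods and $|E|$ bads), the $n^{1-\epsilon}$ inapproximability of MIS \cite{Zuckerman07} transfers with $n$ interpreted as the number of bridges.

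There is not really a hard step here; the main thing to be careful about is that the reduction is approximation-preserving in the right parameter, which is handled by the equality $n'=|V|$, and that feasibility (the $FN\le 0$ constraint) is exactly the independent-set condition, which is immediate from the construction of the bads over edges. \qed
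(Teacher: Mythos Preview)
Your reduction has a genuine logical slip in the feasibility analysis. You correctly state that $b_{uv}$ succeeds iff \emph{at least one} of $s_u,s_v$ is open. But then $FN=0$ forces every $b_{uv}$ to \emph{fail}, which requires \emph{both} $s_u$ and $s_v$ to be closed---not merely one of them. Thus the constraint you derive, ``close at least one endpoint of every edge,'' is wrong; the actual constraint is ``close both endpoints of every edge,'' i.e.\ the set $I=\{v:s_v\text{ open}\}$ must consist solely of isolated vertices of $G$. The correspondence with independent sets collapses: for any $G$ with no isolated vertices the only feasible solution has $TN=0$, so the reduction is not approximation-preserving.

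A minimal repair keeping your architecture is to place the bads on the \emph{non}-edges of $G$ (equivalently, reduce from Maximum Clique, which shares the $n^{1-\epsilon}$ hardness). Then $FN=0$ says that for every non-adjacent pair $(u,v)$ not both of $s_u,s_v$ are open, so the open set is a clique in $G$; with your singleton goods on vertices, $TN$ equals the clique size, and the argument goes through with $n'=|V|$ bridges as before. For comparison, the paper's construction swaps the roles you chose: it assigns a bad to each \emph{vertex} (singleton bridge set) and $k{+}1$ goods to each \emph{edge}, and uses a nonzero $FN$ bound of $k$ rather than $0$.
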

\begin{proof}
In our reduction, each vertex $v$ becomes a bridge $s_v$ and a bad $b_v$ who can cross only $s_v$. Each edge $(u,v)$ becomes $k+1$ goods who can cross bridges $s_u$ and $s_v$. The bound on $FN$ is set to $k$, which prevents any two goods connected by an edge from both failing. \qed
\end{proof}

\begin{proposition}\label{inthardness}
The net-flow $TN-FN$ setting of the Bridges problem is NP-hard to approximate with factor $n^{1-\epsilon}$.
\end{proposition}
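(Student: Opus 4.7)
The plan is to mimic the reduction from the preceding proposition, swapping the roles of goods and bads so that the $TN-FN$ objective directly encodes the size of an independent set in the input graph.

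Given an MIS instance $G=(V,E)$ with $|V|=n$, I would construct a Bridges instance by introducing one bridge $s_v$ per vertex $v\in V$, one good $g_v$ per vertex with $\sigma(g_v)=\{s_v\}$ and weight $1$, and one bad $b_{uv}$ per edge $(u,v)\in E$ with $\sigma(b_{uv})=\{s_u,s_v\}$ and weight $W:=n+1$. For any set $S$ of open bridges, we then have $TN=|S|$ (since each good's unique bridge must be open in order for it to succeed) and $FN = W\cdot e(S)$, where $e(S)$ denotes the number of edges of $G$ incident to $S$; hence the objective equals $|S|-W\cdot e(S)$.

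The next step is to verify that the Bridges optimum equals $\alpha(G)$, the maximum independent set size. For the easy direction, opening the bridges corresponding to any independent set $S$ of $G$ gives $e(S)=0$ and objective value $|S|$. For the converse, if $S$ is not independent then $e(S)\ge 1$ and the objective is at most $n-W<0$; since $S=\emptyset$ always yields value $0$, any positive-valued (in particular any optimal) solution must correspond to an independent set whose size equals its objective value.

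With this equivalence in hand, the approximation factor transfers directly. Any $r$-approximation algorithm for Bridges, applied to a nonempty $G$ (so that $\alpha(G)\ge 1$), must return a solution of value $\ge \alpha(G)/r>0$, which is therefore an independent set of $G$ of size $\ge \alpha(G)/r$. Since the constructed Bridges instance has exactly $n$ bridges, the $n^{1-\epsilon}$-inapproximability of MIS \cite{Zuckerman07} transfers to give the stated bound. The one subtle point requiring care is choosing $W$ large enough that every non-independent open-bridge set has strictly negative objective (so that an approximation algorithm with a positive guarantee cannot return one); $W=n+1$ suffices because $|S|\le n$.
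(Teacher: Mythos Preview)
Your reduction has a genuine error. In the Bridges problem a traveler succeeds as soon as \emph{any} of her bridges is open. Since your bad $b_{uv}$ has $\sigma(b_{uv})=\{s_u,s_v\}$, she succeeds whenever $u\in S$ \emph{or} $v\in S$. Thus $FN=W\cdot e(S)$ is correct only if $e(S)$ counts edges with \emph{at least one} endpoint in $S$, not edges induced by $S$. But then an independent set $S$ does \emph{not} give $e(S)=0$: every edge from $S$ to $V\setminus S$ already contributes. Concretely, opening a single bridge $s_v$ with $\deg(v)\ge 1$ incurs $FN\ge W\deg(v)\ge n+1>n\ge |S|$, so the only open sets with positive objective consist entirely of isolated vertices of $G$. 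Your construction therefore computes the number of isolated vertices, not $\alpha(G)$, and the approximation transfer breaks down.

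The paper's reduction exploits the same ``at least one bridge open'' semantics but in the opposite direction: each edge becomes a \emph{good} with two bridges, and each vertex $v$ carries $\deg(v)-1$ single-bridge \emph{bads} at $s_v$. Opening $s_v$ then yields net value at most $\deg(v)-(\deg(v)-1)=1$, with equality exactly when none of $v$'s edge-goods is already served by another open bridge, which is precisely what forces the open set to be independent. The point is that the disjunctive crossing rule is benign for multi-bridge goods (they are easy to satisfy) but lethal for multi-bridge bads (they are easy to let through); your direct role-swap runs into the latter and cannot be repaired without changing the model.
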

\begin{proof}
In our reduction, each vertex becomes a bridge and (usually) some bad people, and each edge becomes a good person. All the people introduced have value 1 or -1. Specifically, for each vertex $v \in V$, we introduce a bridge $s_v$ and $deg(v)-1$ bads, whose only accessible bridge is $s_v$ itself. (If $deg(v)=0$, we similarly introduce one good.) For each edge $e=(u,v) \in E$, we introduce a good $p_e$, whose accessible bridges are $u$ and $v$.

We now claim that the MIS instance has a solution of value at least $k$ iff the Bridges problem instance does. First, assume there is an independent set $S$ of size $k$.  For each vertex $v \in S$, we open the corresponding bridge. Each bridge $s_v$ with $deg(v)=0$ has one good and no bads, for a net value of 1. For each bridge $s_v$ with $deg(v)>0$, there are $deg(v)$ goods who can cross it (and possibly others) and $deg(v)-1$ bads who can cross only it. Since no two vertices in $S$ are adjacent, though, for each open bridge the goods who can cross it can cross no other open bridges. Therefore for each open bridge $s_v$, all its $deg(v)$ goods will use it, which means that bridge contributes exactly $deg(v)-(deg(v)-1)=1$ to the solution value, for a total of $k$.

Conversely, assume there is a bridges solution of value at least $k$. Observe that no open bridge can contribute value greater than 1, since at most $deg(v)$ goods use it but necessarily all its $deg(v)-1$ bads will do so. Therefore a solution of value $k$ will involve opening at least $k$ bridges. If any bridge can be closed without decreasing the solution quality, do so, repeatedly, until there is no longer any such bridge. At that point, the solution will consist of $k$ open bridges, each of value $k$. But again by the previous argument, in order for two bridges each to contribute value 1, the corresponding vertices must be independent. Thus the $k$ vertices corresponding to the open bridges form an independent set.
\qed
\end{proof}

\begin{corollary}\label{mipprob}
The {maximum-probability} net-flow $TN-FN$ setting of the Bridges problem of \cite{Rubinstein} is no easier than the ({integral}) net-flow $TN-FN$ setting.
\end{corollary}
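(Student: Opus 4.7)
The plan is to establish an approximation-preserving reduction from the integral $TN-FN$ Bridges problem (shown $n^{1-\epsilon}$-hard in Proposition~\ref{inthardness}) to its maximum-probability counterpart. The key observation is that, when bridge $s$ is opened independently with probability $p_s \in [0,1]$, a traveler $p$ succeeds with probability $1 - \prod_{s \in \sigma(p)}(1-p_s)$, so the expected $TN-FN$ objective is a multilinear polynomial in the variables $\{p_s\}$.

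From this multilinearity I would extract the structural claim: with all other probabilities fixed, the objective is affine in any single $p_{s^*}$, so its maximum on $[0,1]$ is attained at an endpoint. Iterating this argument over bridges one at a time shows that there is always an integral optimum, and hence the optimal value of the max-probability version coincides with that of the integral version.

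Next I would make the reduction effective by specifying the corresponding rounding: given any probabilistic solution, sweep over the bridges in arbitrary order and for each one in turn replace $p_s$ by whichever of $\{0,1\}$ yields the larger expected objective (with other probabilities held fixed). Each such replacement weakly improves the expected value, so the resulting integral solution has value at least the probabilistic input's. Feeding an integral instance to a purported $\alpha$-approximation for the max-probability version and then applying this rounding thus produces an integral solution of value at least $OPT/\alpha$, transferring the hardness of Proposition~\ref{inthardness} directly.

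The main thing to verify is that the probabilistic model of \cite{Rubinstein} is the natural one in which bridges have independent Bernoulli states, since that is precisely what makes per-traveler success probabilities the multilinear expressions above; any variant introducing correlation between bridge openings would require a separate argument. Modulo that definitional check, the remainder of the argument is entirely elementary, using only the fact that an affine function on $[0,1]$ attains its maximum at an endpoint.
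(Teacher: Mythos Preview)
Your caveat is exactly the weak point, and it is triggered: the ``maximum-probability'' setting of \cite{Rubinstein} is \emph{not} independent Bernoulli bridge states.  In that model each traveler, seeing the committed probabilities, selects the bridge that maximizes her success chance, so person $p$ succeeds with probability $\max_{s\in\sigma(p)} p_s$, not $1-\prod_{s\in\sigma(p)}(1-p_s)$.  Under this model the expected objective is \emph{not} multilinear: fixing all other probabilities, $TN-FN$ is a difference of piecewise-linear convex functions of $p_{s^*}$ and can have its maximum strictly in the interior.  (For instance, take one good with $\sigma(g)=\{s\}$ and two bads with bridge sets $\{s,s'\}$ and $\{s,s''\}$, with $p_{s'}=p_{s''}=\tfrac12$; then $TN-FN=p_s-2\max(p_s,\tfrac12)$ is maximized at $p_s=\tfrac12$, and rounding $p_s$ to either endpoint strictly decreases it.)  So your sweep-and-round argument, which relies on affinity in each coordinate, does not go through in general.

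The paper's proof sidesteps this by working only with the specific hard instances produced in Proposition~\ref{inthardness}.  There every bad has a \emph{single} bridge, so $FN$ is linear in each $p_s$, while every good has two bridges, making $TN$ convex in each $p_s$; hence $TN-FN$ is convex in each coordinate and an endpoint is always at least as good.  Concretely, the paper compares the number $\gamma$ of goods currently ``using'' bridge $s$ (i.e.\ for whom $s$ attains the max) to the number $\beta$ of bads at $s$: if $\gamma\ge\beta$ it opens $s$ fully, otherwise it closes $s$, and checks in each case that $TN-FN$ does not drop.  Your high-level plan---round one bridge at a time---matches the paper's, but the justification must be this instance-specific convexity (or the paper's case analysis), not multilinearity.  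If the model \emph{were} independent Bernoulli your argument would be both correct and cleaner than the paper's, since it would apply to arbitrary instances rather than only the reduction instance.
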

\begin{proof}
Consider the Bridges problem instance produced in Proposition \ref{inthardness}, but now allow fractional bridge openings and take the max-probability objective. An integral solution is in particular a valid fractional solution, and so the forward direction of the iff goes through unchanged. Now assume there is a max-prob fractional solution of value at least $k$. Suppose some bridge $s$ is open with probability $p$, $0<p<1$. If more bads are using $s$ than goods, then closing $s$ will only improve the solution, so assume otherwise. In this case, assume that $\gamma$ goods are using $s$ and $\beta$ bads are, with $\gamma \ge \beta$. Then fully opening the bridge will increase the value of {\em at least} $\gamma$ goods by amount $(1-p)$---any other goods that had chosen other bridges that were also open with probability $p$ will now shift to this bridge---and will increase the value of $\beta$ bads by the same amount $(1-p)$, for a total change to the bridge's net flow of at least $(1-p)\gamma-(1-p)\beta$, which is non-negative. Therefore we can convert the fractional solution into an integral solution of value still at least $k$. But then by the previous argument we can use the solution to obtain an independent set of size $k$. \qed
\end{proof}

\begin{corollary}
The {maximum-probability} min-error $FP+FN$ setting of the Bridges problem of \cite{Rubinstein} is no easier than the ({integral}) min-error $FP+FN$ setting.\end{corollary}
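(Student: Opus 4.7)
The plan is to mirror Corollary \ref{mipprob}'s fractional-to-integral rounding argument. I would show that from any maximum-probability solution one can recover an integral $FP+FN$ solution of cost no greater than the fractional value; combined with the trivial fact that every integral solution is a valid fractional one, the two optima coincide, so any (approximation) algorithm for the max-prob setting transfers to the integral setting with the same factor. Since integral $FP+FN$ inherits strong inapproximability from Positive-Negative Partial Set Cover, the same hardness applies to the max-prob version.

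First I would pin down the semantics. A max-prob solution assigns to each bridge $s$ an independent opening probability $p_s \in [0,1]$, matching the fractional model used in Corollary \ref{mipprob}; in any realization, a reactive person succeeds iff at least one of her bridges is open. The expected error is therefore
$$E[FP+FN] = \sum_{g \in N} w_g \prod_{s \in \sigma(g)} (1 - p_s) + \sum_{b \in D} \hat w_b \Bigl(1 - \prod_{s \in \sigma(b)} (1 - p_s)\Bigr),$$
where $w_g = -\hat w_g > 0$ is the $FP$-cost of good $g$ failing and $\hat w_b > 0$ is the $FN$-cost of bad $b$ succeeding. The key structural observation is that this expression is \emph{multilinear} in the vector $(p_s)$: fixing every $p_{s'}$ for $s' \ne s$, each product either omits $p_s$ or contains exactly one factor $(1-p_s)$, so $E[FP+FN]$ reduces to an affine function of $p_s$ alone.

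Given multilinearity, the rounding step is immediate. For any bridge $s$ with $0 < p_s < 1$, an affine function on $[0,1]$ attains its minimum at an endpoint, so replacing $p_s$ with whichever of $\{0,1\}$ yields the smaller value cannot increase the objective. Iterating over at most $n$ fractional bridges produces an integral solution of value no larger than that of the starting fractional solution, which is what the corollary requires.

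The main obstacle is really only the semantic one: the argument rests on interpreting max-prob as independent Bernoulli bridge openings with reactive travelers acting per realization, exactly as in Corollary \ref{mipprob}. Once this interpretation is fixed, multilinearity makes the rounding routine; under a non-standard semantics (for instance correlated bridge openings, or an objective that is genuinely nonlinear in the $p_s$) the affine-on-$[0,1]$ reduction fails and a separate argument---likely a case analysis tracking, for each bridge, the weighted count of goods whose other bridges are all closed versus bads whose other bridges are all closed, as in Corollary \ref{mipprob}---would be needed.
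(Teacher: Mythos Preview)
Your multilinearity argument is clean, but it rests on reading ``maximum-probability'' as independent Bernoulli bridge openings, and that is \emph{not} the model at work in Corollary~\ref{mipprob}. The phrase there---``any other goods that had chosen other bridges that were also open with probability $p$ will now shift to this bridge''---makes clear that in the Glazer--Rubinstein setting each traveller commits to a single bridge and succeeds with that bridge's opening probability, so her success probability is $\max_{s\in\sigma(p)} p_s$. Under that semantics the objective is only \emph{piecewise} linear in each $p_s$ (with kinks at the values $M_q=\max_{s'\in\sigma(q)\setminus\{s\}} p_{s'}$), not affine, and your endpoint-rounding step no longer follows. So the claim that your interpretation ``match[es] the fractional model used in Corollary~\ref{mipprob}'' is the gap.

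The paper's own proof avoids any new work. Because $FP+FN = w_N - (TN-FN)$ holds identically for every probability vector $(p_s)$, the rounding already performed in Corollary~\ref{mipprob}---which never decreases $TN-FN$---automatically never increases $FP+FN$. ``Similar'' here really means ``the same rounding, read through a sign change and an additive constant,'' and this works regardless of which of the two semantics one adopts. You could invoke this in one line instead of building a fresh rounding argument.

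If the Bernoulli reading \emph{were} the intended one, your proof would be correct and in fact stronger than the paper's, since multilinearity rounds every instance rather than only the hard instances coming out of the reduction. But given the language of Corollary~\ref{mipprob}, the safe routes are either the constant-shift shortcut above or the case analysis you outline in your final paragraph.
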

\begin{proof}
The proof is similar to that of Corollary \ref{mipprob}. \qed
\end{proof}

\begin{proposition}\label{prop:budgetedhardness}
The Budgeted Interdiction problem in NP-hard to approximate within factor $1-1/e-\epsilon$ for any $\epsilon>0$.
\end{proposition}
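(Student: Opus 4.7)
The plan is a straightforward approximation-preserving reduction from Maximum Coverage, which Feige showed is NP-hard to approximate within factor $(1-1/e+\epsilon)$ for any $\epsilon>0$. Because solution values will line up element-for-element, this hardness threshold transfers directly to Budgeted Interdiction.

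Given a Max Coverage instance with universe $U = \{u_1,\dots,u_n\}$, sets $S_1,\dots,S_m \subseteq U$, and budget $k$, I will construct a BI instance as follows. Introduce one node $v_j$ per set $S_j$ and a single common target node $t$, with uniform sensor cost $1$ and budget $B=k$. For each element $u_i \in U$, introduce a weight-$1$ deterministic evader $e_i$ whose unique route visits, in some arbitrary fixed order, every node $v_j$ with $u_i \in S_j$ and then steps to $t$. The underlying graph can be taken to be any graph containing the edges traversed by these routes---for instance, the union of the per-evader paths.

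The correctness check is immediate: a sensor placement $\{v_{j_1},\dots,v_{j_k}\}$ captures evader $e_i$ with probability $1$ iff some $v_{j_\ell}$ lies on her route, iff $u_i \in S_{j_1}\cup\cdots\cup S_{j_k}$. Hence $\sum_i w_i J_i(\mathbf{r})$ is exactly the number of elements covered by the chosen sets, so the two optima coincide and any $\alpha$-approximation for BI yields an $\alpha$-approximation for Max Coverage.

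There is essentially no technical obstacle; the only thing that needs care is verifying that the construction lives inside the BI model as stated in the preliminaries---each evader is a deterministic Markov chain whose target is its unique absorbing state, costs are unit, and the budget constraint is the same integer $k$---all of which are routine. The resulting bound is tight against the $(1-1/e)$-approximation of the natural greedy algorithm in \cite{Gutfraind09unreactive}, so the proposition establishes that greedy is essentially optimal for Budgeted Interdiction.
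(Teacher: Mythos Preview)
Your proposal is correct and matches the paper's own proof almost exactly: both reduce from Maximum Coverage by introducing one node per set, a common target, and one deterministic evader per element whose route traverses the nodes of the sets containing it, so that captured evaders correspond one-for-one to covered elements. The only differences are cosmetic (you spell out unit costs, unit weights, and the underlying graph explicitly), so there is nothing to add.
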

\begin{proof}
We reduce from Maximum Coverage, which has the stated hardness property \cite{Feige98}.

Given is a family of subsets $S_i$ of a ground set $U=\{e_1,...,e_n\}$. 
The task is to choose $k$ subsets whose union is of maximum cardinality. 
For each set $S_i$ we introduce a corresponding node $v_i$. 
For each element $e_j$ we introduce a corresponding evader whose Markov chain takes it deterministically 
(in some arbitrary order) through all the nodes corresponding to sets containing $e_j$ and thence to a special target node. 
Then a selection of sets covering evader paths is equivalent to a selection of sets covering elements, with exactly the same solution value. 
\qed
\end{proof}

\bibliographystyle{abbrv}
\bibliography{bib,interdict}

\end{document}